\newcommand\R{\mathbb{R}}
\newcommand\C{\mathbb{C}}
\newcommand\dist{{\operatorname{dist}}}
\newtheorem{theorem}{Theorem}[section]
\newtheorem{lemma}[theorem]{Lemma}
\author{Larry Guth}
\address{Department of Mathematics\\
Massachusetts Institute of Technology\\
Cambridge, MA 02139, USA}
\email{lguth@math.mit.edu}
\author{Dominique Maldague}
\address{Department of Mathematics\\
Massachusetts Institute of Technology\\
Cambridge, MA 02139, USA}
\email{dmal@mit.edu}
\author{John Urschel}
\address{Department of Mathematics\\
Massachusetts Institute of Technology\\
Cambridge, MA 02139, USA}
\email{urschel@mit.edu}
\keywords{approximation algorithm, matrix norm}
\subjclass[2020]{15A60, 65F35, 68W25}
\title{Estimating the matrix $p \rightarrow q$  norm}
\begin{document}

\maketitle

\begin{abstract}
The matrix $p \rightarrow q$ norm is a fundamental quantity appearing in a variety of areas of mathematics. This quantity is known to be efficiently computable in only a few special cases. The best known algorithms for approximately computing this quantity with theoretical guarantees essentially consist of computing the $p\to q$ norm for $p,q$ where this quantity can be computed exactly or up to a constant, and applying interpolation. We analyze the matrix $2 \to q$ norm problem and provide an improved approximation algorithm via a simple argument involving the rows of a given matrix. For example, we improve the best-known $2\to 4$ norm approximation from $m^{1/8}$ to $m^{1/12}$. This insight for the $2\to q$ norm improves the best known $p \to q$ approximation algorithm for the region $p \le 2 \le q$, and leads to an overall improvement in the best-known approximation for $p \to q$ norms from $m^{25/128}$ to $m^{3 - 2 \sqrt{2}}$.

%We use elementary inequalities from analysis to improve the polynomial time approximations for the matrix $2\to q$ norm $\|A\|_{2\to q}$ in the hypercontractive range $q\ge 2$. Our algorithm gives a multiplicative error of $m^{\frac{q-2}{2q(q-1)}}$, compared to the previous best bound of $m^{\frac{q-2}{q^2}}$ obtained via interpolation between $\|A\|_{2\to2}$ and $\|A\|_{2\to \infty}$. %the previous best which Via interpolation, this leads to improvements for best known polynomial time estimates of $\|A\|_{p\to q}$ for some range of $p$ and $q$. 

%Feel free to edit/make more precise
\end{abstract}

\section{Introduction} 
Given a matrix $A \in \mathbb{C}^{m \times n}$, the $p \rightarrow q$ norm of $A$, defined by
$$\|A\|_{p\rightarrow q} := \max_{x \in \mathbb{C}^n} \frac{\|Ax\|_q}{\|x\|_p}, \qquad p,q \in [1,\infty),$$
where $\|\cdot\|_p$ is the $\ell_p$-norm of a vector, is a measure of how large in magnitude a vector of a given $p$-norm can become in $q$-norm from the application of $A$. The matrix $p \rightarrow q$ norm is a fundamental quantity, appearing in a variety of areas of mathematics and computer science. The spectral norm ($p = q =2$) is perhaps the most used and well-known instance of a $p \rightarrow q$ norm, and is equal to the largest singular value of a matrix. More generally, matrix $p \rightarrow p$ norms (simply called $p$-norms) appear in relative error estimates for linear systems $Ax =b$, as the $p\rightarrow p$ condition number of a matrix governs the worst-case relative error for $x$ in $p$-norm due to small perturbations to $A$ and $b$; see both \cite[Chapter 2.6]{golub2013matrix} and \cite[Chapter 14]{higham2002accuracy} for details. The special case of $p = \infty$ and $q =1$ is the well-studied Grothendieck problem \cite{grothendieck1996resume,pisier2012grothendieck}: maximize $\langle y, Ax \rangle$ subject to $\|x\|_\infty, \|y \|_\infty =1$, of which the max-cut problem in graph theory is a special case \cite{goemans1995improved}. The case of $p =2$ and $q = 4$ is also of special significance. For instance, distinguishing between entangled and separable states in quantum information theory is known to be related to the computation of the $2 \rightarrow 4$ norm \cite{harrow2019limitations}. The $2 \rightarrow q$ norm, for any even $q\ge 4$, is directly related to small-set expansion in graph theory and the unique games conjecture in computational complexity \cite{barak2012hypercontractivity}. More generally, norms in the hyper-contractive region of $p < q$ are closely related to mixing of Markov chains and a variety of problems in theoretical computer science, see \cite{gine1997lectures,biswal2011hypercontractivity} for details and further examples. The most general setting of arbitrary $p$ and $q$ is also broadly applicable to robust optimization, see \cite[Section 1.1.2]{steinberg2005computation} and the examples and references contained therein.

One of our initial questions about $\|A\|_{p\to q}$ was whether techniques proving $L^p$ to $L^q$ boundedness from analysis could be applied to improve approximation algorithms for $\|A\|_{p\to q}$. Consider, for example, the Fourier transform $\widehat{f}(\xi)=\int_{\R^n}e^{2\pi i x\cdot\xi}f(x)dx$ of a function $f:\R^n\to\C$. By the classical Hausdorff-Young inequality, $C_{p,q,n}:=\sup_{\|f\|_{L^p}=1}\|\widehat{f}\|_{L^q}$
is finite if and only if $1\le p\le 2$ and $\frac{1}{p}+\frac{1}{q}=1$. The optimal constant $C_{p,q,n}$ is achieved by Gaussian functions, which was proved by Babenko \cite{MR0138939} for even $q$ and then Beckner \cite{MR0385456} for all $2\le q$. %Lieb identified $A_{q,n}$ for even exponents $q$ and Beckner characterized $A_{q,n}$ for all $2\le q$. 
In \cite{sharpenedHY}, Christ sharpened the Hausdorff-Young inequality by proving that  
\[ \|\widehat{f}\|_{L^q} \le [C_{p,q,n}-\dist_p(f,\mathcal{G})]\|f\|_{L^p} \]
for an appropriate distance function $\dist_p(f,\mathcal{G})$ of $f$ to the set of Gaussians. One basic idea we use in our approximation of $\|A\|_{p\to q}$ involves the following fundamental maximization principle that Christ used in \cite{sharpenedHY}. Suppose that $\|f\|_{L^p}=1$ and that for a small parameter $\delta>0$, $C_{p,q,n}(1-\delta)\le \|\widehat{f}\|_{L^q}$. If for another small parameter $\eta>0$, $f=g+h$ is a disjointly supported decomposition of $f$ with $\|h\|_{L^p}\le \eta$, then 
\[  C_{p,q,n}(1-\delta)\le \|\widehat{f}\|_{L^q}\le \|\widehat{g}\|_{L^q}+C_{p,q,n}\|h\|_{L^p}\le \|\widehat{g}\|_{L^q}+C_{p,q,n}\eta,  \]
so $C_{p,q,n}(1-\delta-\eta)\|g\|_{L^p}\le \|\widehat{g}\|_{L^q}$. The basic principle is that if $f$ has close to maximal ratio ${\|\widehat{f}\|_{L^q}}/{\|f\|_{L^p}}$ and $g$ is a significant portion of the $L^p$ mass of $f$, then $g$ also has close to maximal ratio ${\|\widehat{g}\|_{L^q}}/{\|g\|_{L^p}}$. We use this idea in the proof of the key lemma (Lemma \ref{lm:2toq}) below. 

Computing the matrix $p \rightarrow q$ norm consists of maximizing a convex function over a convex domain, and, unfortunately, is difficult in general. The matrix $p \rightarrow q$ norm is only efficiently computable (up to arbitrary error) in three special cases: $p = q = 2$, $p = 1$, and $ q= \infty$. As previously mentioned, the spectral norm $\|A\|_{2 \rightarrow 2}$ is equal to the largest singular value $\sigma_{\max}(A)$ of $A$ and is achieved by any right singular vector $v$ corresponding to $\sigma_{\max}(A)$. Let $r_j$, $j = 1,...,m$, denote the conjugate transpose of the rows of $A$. When $q =\infty$, $\|A\|_{p \rightarrow \infty} = \max_{j \in \{1,...,m\}} \|r_j\|_{p^*}$, where $1/p + 1/p^* = 1$, and this value is achieved by the vector $v=(|r_j(1)|^{p^*-2}r_j(1),\ldots,|r_j(n)|^{p^*-2}r_j(n))$, where $r_j=(r_j(1),\ldots,r_j(n))$ is a row vector with maximal ${p^*}$ norm. In fact, the case of $p = 1$ is mathematically equivalent to $q = \infty$, as
$$ \|A\|_{p \rightarrow q} = \max_{\substack{\|x\|_p = 1, \\ \|y \|_{q^*} = 1}} \langle y, A x \rangle = \|A^*\|_{q^* \rightarrow p^*}, \qquad \text{where} \quad \frac{1}{p} + \frac{1}{p^*} = \frac{1}{q} + \frac{1}{q^*} = 1 .$$
More generally, by duality, it suffices to only consider $\|A\|_{p \rightarrow q}$ for $1/p + 1/q \le 1$. For this reason, our results focus on the regimes $p,q \ge 2$ and $1 \le p \le 2 \le q$, as $1 \le p,q \le 2$ is mathematically equivalent to $p,q \ge 2$. In the region $1 \le q \le 2 \le p$, the matrix $p \rightarrow q$ norm can be approximated up to a constant factor of at most $\pi /2$, this is often called Nesterov's $\pi/2$ Theorem, and the region $1 \le q \le 2 \le p$ is often called the Nesterov region \cite{nesterov1998semidefinite}. When the underlying matrix is non-negative, the matrix $p \rightarrow q$ norm can be computed exactly when $p \ge q \ge 1$ (see \cite{steinberg2005computation,bhaskara2011approximating}), a setting with application to oblivious routing in graph theory \cite{englert2009oblivious}. 

\subsection{Hardness Results} Outside of these special cases, the majority of results for matrix norms consists of algorithmic lower bounds. In \cite{rohn2000computing}, Rohn showed that computing the $\infty \rightarrow 1$ norm is NP-hard, a result later expanded upon by Steinberg, who showed that computing the $p \rightarrow q$ norm for $p >q$ is NP-hard \cite{steinberg2005computation}. Later, Hendrickx and Olshevsky proved that computing the $p \rightarrow p$ norm for any rational $p \ne 1,2$ is NP-hard, and that computing the $\infty \rightarrow q$ norm to an arbitrary constant for any rational $q$ is NP-hard \cite{hendrickx2010matrix}. Bhaskara and Vijayaraghavan proved that it is NP-hard to compute the $p\rightarrow q$ norm up to any constant factor for $2 < q \le p$, and, under a reasonable assumption from computational complexity, show that this norm cannot be approximated within a factor of $2^{\log^{1-\epsilon}(n)}$ for any constant $\epsilon >0$ \cite{bhaskara2011approximating}. Barak et al. treated the $2\rightarrow q$ norm problem, and showed that the $2\rightarrow 4$ norm is NP-hard to approximate up to inverse-polynomial (in dimension) precision, and, assuming the exponential time hypothesis, better than a factor of $2^{O(\log ^{1/2 - \epsilon} n)}$ \cite{barak2012hypercontractivity}. Finally, Bhattiprolu et al. showed, under a reasonable assumption from computational complexity, that it is NP-hard to approximate the $p \rightarrow q$ norm for $2 < p < q$ within a factor of $2^{O(\log^{1-\epsilon} n)}$ \cite{bhattiprolu2023inapproximability}.

\subsection{Existing Algorithms} In terms of methods, in 1974 Boyd proposed a power method type heuristic for computing the matrix $p \rightarrow q$ norm, with theoretical analysis for the case of non-negative matrices, but without rigorous guarantees for arbitrary matrices \cite{boyd1974power}. Higham later produced a modified power method using a well-chosen initial guess that performed well in practice for the $p \rightarrow p$ norm and had the theoretical approximation guarantee of $n^{1-1/p}$ for $p \ge 2$ \cite{higham1992estimating}. Steinberg, by approximating norms in the Nesterov region $1 \le q \le 2 \le p$ via semidefinite programming and using interpolation, produced a $(\pi/2)^{\frac{1}{q}} m^{\frac{q-2}{q^2}}$ approximation approximation algorithm for $p,q \ge 2$ and a $m^{\frac{2(q-2)(p-1)}{p q^2}} \, n^{\frac{(p-1)(2-p)}{p^2}}$ algorithm for $p\le 2 \le q$, leading to a worst-case approximation of $O(\max\{m,n\}^{25/128})$ for an arbitrary $p$ and $q$. For the $p \rightarrow p$ norm, Steinberg's algorithm has a stronger theoretical guarantee than that of Higham (e.g., $m^{1/8}$ vs $m^{1/4}$ for $\|A\|_{4 \rightarrow 4}$), but this theoretical gap is offset by the practical difference in efficiency of their two algorithms, as Steinberg's estimate relies on the solution of a semidefinite program. See \cite{jiang2020faster} for a summary of the existing  literature and best-known complexity for cutting plane and interior point techniques for approximately solving a semidefinite program. As previously mentioned, the $2\rightarrow q$ norm is of particular interest in theoretical computer science. One interesting area of research for $2\rightarrow q$ norms is in producing algorithms with bounded additive error. Barak et al. produced a $2^{O(\log^2(n) \epsilon^{-2})}$ time algorithm that computes $\|A\|^4_{2\rightarrow 4}$ up to additive error $\epsilon \|A\|^2_{2 \rightarrow 2} \|A\|^2_{2 \rightarrow \infty}$ \cite{barak2012hypercontractivity}, a result later improved and extended to arbitrary $q$ by Brand\~ao and Harrow \cite{brandao2015estimating}.

\subsection{Our Results} In this work, we improve upon existing algorithms for computing $\|A\|_{p \rightarrow q}$. We do so by producing superior estimates for the matrix $2 \rightarrow q$ norm (Lemma \ref{lm:2toq}) and extending these results using interpolation. Our simplest result for estimating $\|A\|_{2\to q}$ is recorded in the following theorem. %Our improved estimate for $\|A\|_{2\to q}$ is the following. The estimate for In particular, we prove the following.

\begin{theorem}\label{thm:simplep=2}
Let $A \in \mathbb{C}^{m \times n}$, $r_1,...,r_m \in \mathbb{C}^n$ denote the conjugate transpose of the $m$ rows of $A$, $v \in \mathbb{C}^n$ denote a right singular vector corresponding to the $2$-norm of $A$, and $S = \{v,r_1,...,r_m\}$. Then
$$\max_{\substack{x \in S  \\ x \ne 0} } \, \frac{\|A x\|_q}{\|x\|_2} \ge  (2m^{\frac{1}{q}})^{-\frac{q-2}{2(q-1)}}  \|A\|_{2 \rightarrow q}\qquad \text{for all} \quad  q \ge 2. $$ 

\end{theorem}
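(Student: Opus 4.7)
The plan is to sharpen the naive row lower bound using Hölder duality, then close the loop with the entrywise comparison $\|y\|_q^q \le \|y\|_\infty^{q-2}\|y\|_2^2$. I would fix $x^\ast \in \mathbb{C}^n$ with $\|x^\ast\|_2 = 1$ attaining $\|A x^\ast\|_q = N := \|A\|_{2\to q}$, set $y^\ast := A x^\ast$, $\sigma := \|A\|_{2\to 2}$, and $R := \max_{x \in S,\,x \neq 0}\|Ax\|_q/\|x\|_2$. By Hölder duality, there exists $u^\ast \in \mathbb{C}^m$ with $\|u^\ast\|_{q^\ast} = 1$ (where $1/q + 1/q^\ast = 1$) satisfying $|\langle u^\ast, A x\rangle| \le \|Ax\|_q$ for all $x$, with equality at $x = x^\ast$. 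Combined with the operator-norm duality $\|A^\ast\|_{q^\ast \to 2} = \|A\|_{2\to q} = N$, equality in Cauchy--Schwarz in the chain
\[
N = \langle u^\ast, A x^\ast\rangle = \langle A^\ast u^\ast, x^\ast\rangle \le \|A^\ast u^\ast\|_2 \le N
\]
forces $A^\ast u^\ast = N x^\ast$.

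The heart of the argument is a duality-enhanced row lower bound: substituting the test vector $r_j$ and invoking the identity above, for each $j \in \{1,\ldots,m\}$,
\[
\|Ar_j\|_q \;\ge\; |\langle u^\ast, A r_j\rangle| \;=\; |\langle A^\ast u^\ast, r_j\rangle| \;=\; N\,|\langle x^\ast, r_j\rangle| \;=\; N|y_j^\ast|.
\]
The routine diagonal bound $\|Ar_j\|_q \ge |(A r_j)_j| = \|r_j\|_2^2$ implies $R \ge \|r_j\|_2$ for every $j$, so in particular $\|r_j\|_2 \le R$. Dividing the dual bound by $\|r_j\|_2$ and inserting $\|r_j\|_2 \le R$,
\[
R \;\ge\; \frac{\|Ar_j\|_q}{\|r_j\|_2} \;\ge\; \frac{N|y_j^\ast|}{\|r_j\|_2} \;\ge\; \frac{N|y_j^\ast|}{R},
\]
so $R^2 \ge N|y_j^\ast|$ for every $j$, hence $R^2 \ge N\|y^\ast\|_\infty$.

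Next I would close the loop. The entrywise inequality
\[
N^q = \sum_j |y_j^\ast|^q \;\le\; \|y^\ast\|_\infty^{q-2}\|y^\ast\|_2^2 \;\le\; \|y^\ast\|_\infty^{q-2}\sigma^2
\]
rearranges to $\|y^\ast\|_\infty \ge (N^q/\sigma^2)^{1/(q-2)}$. Substituting into $R^2 \ge N\|y^\ast\|_\infty$ and raising to the $(q-2)/2$ power yields $\sigma R^{q-2} \ge N^{q-1}$. Finally, the spectral test vector $v$ contributes $R \ge \|Av\|_q \ge m^{1/q - 1/2}\|Av\|_2 = m^{1/q - 1/2}\sigma$, equivalently $\sigma \le m^{1/2 - 1/q} R$. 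Plugging in gives $m^{1/2 - 1/q}R^{q-1} \ge N^{q-1}$, i.e., $R \ge N\,m^{-(q-2)/(2q(q-1))}$, which dominates the stated bound because $(2m^{1/q})^{-(q-2)/(2(q-1))} = 2^{-(q-2)/(2(q-1))}\,m^{-(q-2)/(2q(q-1))} \le m^{-(q-2)/(2q(q-1))}$.

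The main obstacle is spotting the duality step: without the upgrade $\|Ar_j\|_q \ge N|y_j^\ast|$, the same combination of ingredients only produces $R \ge N\,m^{-(q-2)/q^2}$, matching Steinberg's earlier bound. The duality effectively trades a power of $\sigma$ for a power of $R$ in the final chain, which is what introduces the factor $q-1$ in the denominator of the improved exponent.
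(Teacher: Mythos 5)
Your proof is correct, and it takes a genuinely different route from the paper's. The paper derives Theorem \ref{thm:simplep=2} as a corollary of Lemma \ref{lm:2toq}, whose proof is a dichotomy: for a threshold $\lambda$, either every row satisfies $|\langle w,r_i\rangle|\le\lambda\|A\|_{2\to\infty}$ (and an interpolation-type bound applies), or some row $r_k$ has large overlap with the extremizer $w$, in which case subtracting the orthogonal projection of $w$ onto $r_k$ and using the triangle inequality shows $\|Ar_k\|_q/\|r_k\|_2\ge\tfrac{\lambda}{2}\|A\|_{2\to q}$; optimizing over $\lambda$ gives the result, and the factor $2$ in the statement is inherited from the estimate $(1-t)^{1/2}\le 1-t/2$ in that projection step. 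Your argument replaces the dichotomy entirely with the dual extremizer: the identity $A^*u^*=Nx^*$ upgrades every row test vector at once to $\|Ar_j\|_q\ge N|y_j^*|$, and combining this with $\|Ar_j\|_q\ge\|r_j\|_2^2$ (hence $\|r_j\|_2\le R$), the bound $N^q\le\|y^*\|_\infty^{q-2}\sigma^2$, and $\sigma\le m^{1/2-1/q}R$ closes the loop. I checked each step (including the Cauchy--Schwarz equality case and the convention $(Ax^*)_j=\langle r_j,x^*\rangle$) and found no gaps; note that you should dispose separately of the degenerate cases $q=2$, $A=0$, and $r_j=0$, all of which are trivial. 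What each approach buys: yours is shorter for this corollary and in fact yields the strictly stronger constant $m^{-\frac{q-2}{2q(q-1)}}$ in place of $(2m^{1/q})^{-\frac{q-2}{2(q-1)}}$, eliminating the factor $2^{\frac{q-2}{2(q-1)}}$ (this does not contradict the paper, whose remark about the unavoidable factor concerns the projection technique of Lemma \ref{lm:2toq}, not the statement of the theorem). The paper's route, on the other hand, produces the data-dependent refinement of Lemma \ref{lm:2toq} involving $\|Ar_j\|_q/\|Ar_j\|_\infty$ and $\|Av\|_q/\|Av\|_2$, which drives the practical estimates in \textsection\ref{exs} and Table \ref{tab:experiments}; your argument gives its own refinement, $\sigma R^{q-2}\ge N^{q-1}$, but not that particular one.
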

In \textsection \ref{exs}, we show that Theorem \ref{thm:simplep=2} is the best approximation possible with the given data. We also give examples to demonstrate the significantly improved bounds from Theorem \ref{thm:simplep=2} compared to Steinberg's estimate using interpolation between the $2\to 2$ and $2\to\infty$ norms \cite{steinberg2005computation}. 
See Figure \ref{fig:plots} for a comparison of the approximation exponents for Steinberg's interpolation approach and for Theorem \ref{thm:simple}. As an example, interpolation gives an $\sqrt[3]{2} \, m^{1/12}$ approximation for the popular $2 \rightarrow 4$ norm, while interpolation provides only a $m^{1/8}$ approximation. 

By also incorporating the $1\to q$ norm and interpolation, we use Theorem \ref{thm:simplep=2} to improve $p\to q$ norm estimates for a range of $p$ and $q$ exponents, which leads to the following generalization of Theorem \ref{thm:simplep=2}.  
\begin{theorem}\label{thm:simple}
Let $A \in \mathbb{C}^{m \times n}$, $r_1,...,r_m \in \mathbb{C}^n$ denote the conjugate transpose of the $m$ rows of $A$, $v \in \mathbb{C}^n$ denote a right singular vector corresponding to the $2$-norm of $A$, and $y \in \mathbb{C}^n$ achieve the $1 \rightarrow q$ norm of $A$, and $S = \{v,r_1,...,r_m\}$. Then
$$\max_{\substack{x \in S \cup \{y\} \\ x \ne 0} } \, \frac{\|A x\|_q}{\|x\|_p} \ge \left[ (2m^{\frac{1}{q}})^\frac{q-2}{q-1} n^{\frac{2-p}{p}}\right]^{-\frac{p-1}{p}}  \|A\|_{p \rightarrow q}$$ 
for all
$\big\{ (p,q) \, | \, p \in [1,\sqrt{2}], \, q \in \big[\frac{2}{2-p},\frac{p}{p-1}\big]\big\}$ and $\big\{ (p,q) \, | \, p \in [\sqrt{2},2], \, q \in \big[\frac{p}{p-1},\frac{2}{2-p}\big]\big\}$, and
$$  \max_{\substack{x \in S \\ x \ne 0} } \, \frac{\|A x\|_q}{\|x\|_p} \ge \left[ (2m^{\frac{1}{q}})^\frac{q-2}{q-1} n^{\frac{p-2}{p}}\right]^{-\frac{1}{2}}  \|A\|_{p \rightarrow q} $$
for all $p,q \in [2,\infty)$.
\end{theorem}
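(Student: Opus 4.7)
The plan is to derive both parts of Theorem~\ref{thm:simple} from Theorem~\ref{thm:simplep=2} via Riesz--Thorin interpolation. The two regimes differ only in the choice of interpolation endpoints: $(2,q)$ paired with $(\infty,q)$ for Part~2, and $(1,q)$ paired with $(2,q)$ for Part~1.

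For Part~2 ($p,q \ge 2$), Riesz--Thorin interpolation between $(2,q)$ and $(\infty,q)$ with parameter $\theta = 2/p$ gives
\[ \|A\|_{p\to q} \le \|A\|_{2\to q}^{2/p}\,\|A\|_{\infty\to q}^{(p-2)/p}. \]
To control $\|A\|_{\infty\to q}$, I would use the embedding $\|x\|_2 \le n^{1/2}\|x\|_\infty$, which yields $\|A\|_{\infty \to q} \le n^{1/2}\|A\|_{2\to q}$, and hence $\|A\|_{p\to q} \le n^{(p-2)/(2p)}\|A\|_{2\to q}$. Theorem~\ref{thm:simplep=2} now supplies $\|A\|_{2\to q} \le (2m^{1/q})^{(q-2)/(2(q-1))} N$ for $N := \max_{x\in S}\|Ax\|_q/\|x\|_2$. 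Finally, since $\|x\|_p\le\|x\|_2$ for $p\ge 2$, the target maximum $M := \max_{x\in S}\|Ax\|_q/\|x\|_p$ satisfies $M \ge N$, and chaining these three inequalities produces the bound.

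For Part~1 ($p \in [1,2]$), I would interpolate between $(1,q)$ and $(2,q)$ with parameter $\theta = (2-p)/p$ to obtain
\[ \|A\|_{p\to q} \le \|A\|_{1\to q}^{(2-p)/p}\,\|A\|_{2\to q}^{2(p-1)/p}. \]
The vector $y$ satisfies $\|Ay\|_q = \|A\|_{1\to q}\|y\|_1$, and after normalising $\|y\|_1 = 1$ we have $\|y\|_p \le \|y\|_1 = 1$, so $\|Ay\|_q/\|y\|_p \ge \|A\|_{1\to q}$; consequently $M' := \max_{x \in S\cup\{y\}}\|Ax\|_q/\|x\|_p \ge \|A\|_{1\to q}$. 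For the second factor, H\"older's inequality gives $\|x\|_p \le n^{1/p - 1/2}\|x\|_2$ (valid for $p\le 2$), so $N \le n^{1/p - 1/2}M'$, and combining with Theorem~\ref{thm:simplep=2} yields $\|A\|_{2\to q} \le (2m^{1/q})^{(q-2)/(2(q-1))} n^{1/p-1/2}\,M'$. Substituting both bounds into the Riesz--Thorin inequality and collecting exponents reproduces the factor $\bigl[(2m^{1/q})^{(q-2)/(q-1)} n^{(2-p)/p}\bigr]^{-(p-1)/p}$.

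The main obstacle is purely bookkeeping: one must verify that the exponents of $M'$ on the right-hand side sum to $1$ (so $M'$ appears linearly), and that the resulting powers of $n$ and $2m^{1/q}$ collapse into the stated factor. No new analytic input is needed beyond Theorem~\ref{thm:simplep=2}, and the region restrictions in the statement appear to demarcate where this interpolation strategy improves on Steinberg's estimate rather than to reflect any constraint of the proof technique itself.
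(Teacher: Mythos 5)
Your proposal is correct and follows essentially the same route as the paper: the paper derives Theorem \ref{thm:simple} from Theorem \ref{thm:complicated}, whose proof uses exactly your two interpolation inequalities $\|A\|_{p\to q}\le \|A\|_{1\to q}^{(2-p)/p}\|A\|_{2\to q}^{2(p-1)/p}$ and $\|A\|_{p\to q}\le n^{1/2-1/p}\|A\|_{2\to q}$ together with the improved $2\to q$ estimate, followed by the same worst-case norm comparisons you invoke. The only difference is the order of operations (you simplify to Theorem \ref{thm:simplep=2} before interpolating rather than after), and your reading of the region restrictions as marking where the bound improves on prior estimates, not where the argument is valid, is also correct.
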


We note that improved estimates for the regions not directly addressed in Theorem \ref{thm:simple} follow from duality, as $p,q \in [1,2]$, $\big\{ (p,q) \, | \, q \in \big[\frac{\sqrt{2}}{\sqrt{2}-1},\infty\big), \, p \in \big[\frac{q}{q-1},\frac{2(q-1)}{q}\big]\big\}$, and $\big\{ (p,q) \, | \, q  \in \big[2,\frac{\sqrt{2}}{\sqrt{2}-1}\big], \, p \in \big[\frac{2(q-1)}{q},\frac{q}{q-1}\big]\big\}$ are dual to $p,q \in [2,\infty)$, $\big\{ (p,q) \, | \, p \in [1,\sqrt{2}], \, q \in \big[\frac{2}{2-p},\frac{p}{p-1}\big]\big\}$, and $\big\{ (p,q) \, | \, p \in [\sqrt{2},2], \, q \in \big[\frac{p}{p-1},\frac{2}{2-p}\big]\big\}$, respectively. For the region $p,q \in [2,\infty)$ (and, again, by duality, $p,q \in [1,2]$), the worst-case estimates for our simple estimator outperforms semidefinite programming-based interpolation estimates for square matrices when $ p < 2 q^2(q-1)/(q^3-2q^2+4q-4)$. We give a visual representation of these details, as well as a contour plot of the best approximation exponent with respect to $p$ and $q$ in Figure \ref{fig:plots}. Theorem \ref{thm:simple} is a simplified version of our main result that makes use of the worst-case Lipschitz bounds for norms. We prove a tighter version of this result in Theorem \ref{thm:complicated}, which has the same worst-case behavior, but improved bounds for the majority of matrices $A$. Overall, Theorem \ref{thm:simple} implies a global $O\big(\max\{m,n\}^{3-2\sqrt{2}}\big)$ approximation algorithm, as
$$\max \bigg\{ \frac{(p-1)(2-p)}{p^2} + \frac{(p-1)(q-2)}{pq(q-1)} \, \bigg| \, 1\le p \le 2 \le q, \, \frac{1}{p} + \frac{1}{q} \le 1\bigg\} = 3 - 2 \sqrt{2},$$
achieved by $p = \sqrt{2}$ and $q = \frac{\sqrt{2}}{\sqrt{2}-1}$. We note that the approximations provided by Theorem \ref{thm:simple} are not simply asymptotic theoretical bounds, but concrete estimates that can be used in practice. In Table \ref{tab:numbers}, we provide the worst-case approximation ratio for the $2\rightarrow 4$ norm and any $p \to q$ norm for matrices of dimension $10^3$ to $10^9$. Theorem \ref{thm:simple} gives an approximation for any million by million matrix in any $p \rightarrow q$ norm within a factor of $11.081$, and within a factor of $3.985$ for the $2 \rightarrow 4$ norm. This illustrates that, while for sufficiently large matrices there is no hope for a polylogarithmic approximation to the $p \rightarrow q$ norm, this asymptotic algorithmic lower bound does not apply to the sizes of matrices typically seen in practice.

\begin{figure}[t]
    \centering
    \subfigure[Best $p\rightarrow q$ norm approximation by region]{\includegraphics[width=.468\linewidth]{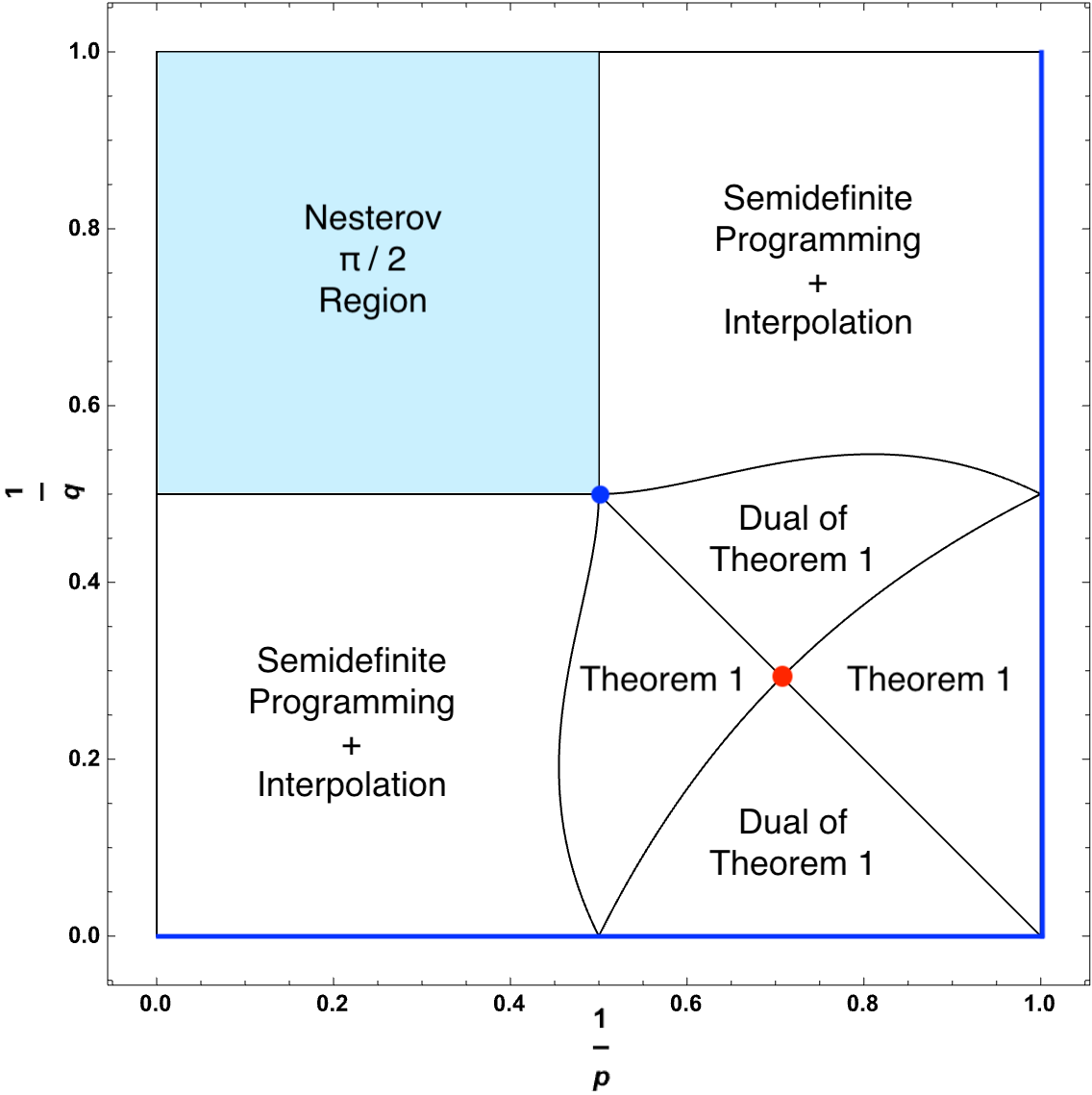}} \qquad
    \subfigure[Exponent of approximation for $2\rightarrow q$ norm]{\includegraphics[width=.468\linewidth]{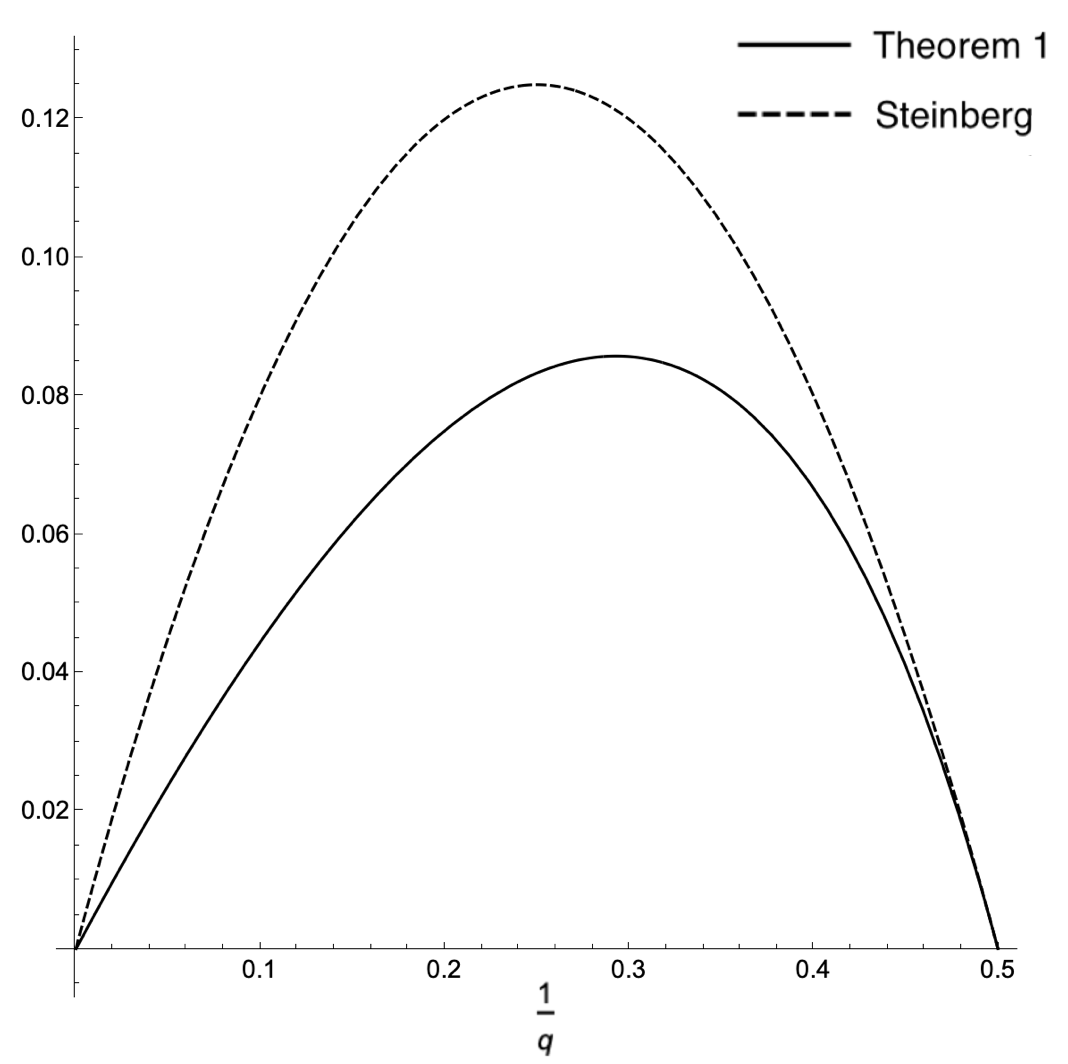}} 
    \caption{ Figure (a) is a plot of the square $(1/p,1/q) \in [0,1]^2$. Efficiently computable $p,q$ are in dark blue and the Nesterov region computable to a constant factor is in light blue. We note the regions of $p,q$ for which Theorem \ref{thm:simple} improves upon the best-known approximation. The worst approximation for any $p,q$ is given by $p = \sqrt{2}$, $q = \sqrt{2}/(\sqrt{2}-1)$, marked in red. Figure (b) plots the worst-case approximation exponent (e.g., the value $\alpha$ in a $C m^\alpha$ approximation) for Steinberg's interpolation approach and Theorem \ref{thm:simple}. For the popular $2 \rightarrow 4$ norm, interpolation achieves an exponent of $1/8$, while Theorem \ref{thm:simple} achieves an exponent of $1/12$, a $33.\bar{3}\%$ decrease in the exponent.}
     \label{fig:plots}
\end{figure}

\begin{table}[t]
\centering
\begin{tabular}{|l || r | r | r | r| r| r|r|}
 \hline $\max\{m,n\} 
= $ & $10^3$ & $10^4$ & $10^5$ & $10^6$ & $10^7$ & $10^8$ & $10^9$ \\
\hline \hline
$2 \rightarrow 4$ norm &\, 2.241 & \, 2.715 & \,3.289 & \, 3.985& \, 4.827 & \, 5.849 & \,  7.086 \\ \hline
$p \rightarrow q$ norm & 3.388 & 5.029& 7.465 & 11.081 & 16.449 & 24.419 & 36.248 \\\hline 
\end{tabular}
\vspace{4 mm}
\caption{The worst-case approximation ratio for the $2\rightarrow 4$ norm and the $p \rightarrow q$ norm (achieved by $p = \sqrt{2}$, $q = \sqrt{2}/(\sqrt{2}-1)$) for matrices of varying sizes of $m$ and $n$. Despite the non-existence of an efficient polylogarithmic approximation algorithm for the matrix $p \rightarrow q$ norm problem, the exponents in the polynomial factors of Theorem \ref{thm:simple} grow quite slowly. Indeed, the $p\rightarrow q$ norm of a million by million sized matrix can be quickly approximated within a factor of $\approx 11$, while $\ln(10^6) \approx 13.8$. In the case of the $2\rightarrow 4$ norm, a billion by billion sized matrix can be approximated up to a factor of $\approx 7$, while $\ln(10^{9}) \approx 20.7$.}
\label{tab:numbers}
\end{table}

The remainder of the paper is as follows. In Section \ref{sec:2toq}, we study the matrix $2\rightarrow q$ norm problem. We produce improved results by moving beyond standard interpolation bounds and techniques and exploiting the information contained in each row of a matrix. We illustrate the tightness of our bounds using a variety of known matrices. In Section \ref{sec:extend}, we extend our results for $2\rightarrow q$ norms to arbitrary $p \rightarrow q$ norms, producing both a proof of Theorem \ref{thm:simple} and inspiring a practical algorithm for estimating $p \rightarrow q$ norms. Finally, we perform numerical experiments illustrating the practical efficiency of the theoretical results of this work.

\noindent\textbf{Acknowledgements} LG is supported by a Simons Investigator award. DM is supported by the National Science Foundation under Award No. 2103249.

\section{Estimating the $2 \rightarrow q$ Norm}\label{sec:2toq}

The approximation algorithm of Steinberg approximates the $2 \to q$ norm, $q \ge 2$, of a matrix by computing the $2 \to 2$ and $2 \to \infty$ norms and using interpolation. This estimate depends on two vectors, a right singular vector corresponding to the largest singular value and the conjugate transpose of a row with maximal $2$-norm. In this section, we prove that making use of all rows of the matrix leads to an improved approximation algorithm. We present the following lemma.

\begin{lemma}\label{lm:2toq}
Let $A \in \mathbb{C}^{m \times n}$, $r_1,...,r_m \in \mathbb{C}^n$ denote the conjugate transpose of the $m$ rows of $A$, $v \in \mathbb{C}^n$ denote a right singular vector corresponding to the $2$-norm of $A$, and $S = \{v,r_1,...,r_m\}$. Then
$$\max_{\substack{x \in S \\ x \ne 0} } \, \frac{\|A x\|_q}{\|x\|_2} \ge  \left[\frac{1}{2\|A\|_{2\rightarrow \infty}}\max_{\substack{x \in S \\ x \ne 0}} \frac{\|Ax\|_q}{\|x\|_2 } \right]^{\frac{q-2}{2(q-1)}} \left[\frac{\|Av\|_q}{\|Av\|_2} \right]^{\frac{1}{q-1}} \|A\|_{2 \rightarrow q} \qquad \text{for all} \quad  q \ge 2.$$ 
\end{lemma}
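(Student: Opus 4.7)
My plan is to mimic the Christ-style principle outlined in the introduction. Let $u \in \mathbb{C}^n$ be a maximizer of $\|Au\|_q$ subject to $\|u\|_2 = 1$ (which exists by compactness), and write $w := Au$ and $N := \|w\|_q = \|A\|_{2\to q}$. Denote by $M$ the left-hand side of the inequality to be proved. Normalize $\|v\|_2 = 1$ so that $\|Av\|_2 = \|A\|_{2\to 2}$.

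The heart of the proof -- and the main obstacle -- is a pointwise bound on the coordinates of $w$: for every index $i$ with $r_i\ne 0$,
\[ |w_i| \le \frac{2M\|r_i\|_2}{N}. \]
To prove this, fix $i$, set $\hat{r} = r_i/\|r_i\|_2$, and write $u = \alpha \hat{r} + u^\perp$ with $u^\perp \perp r_i$. One has $|\alpha| = |w_i|/\|r_i\|_2$ and $\|u^\perp\|_2 = \sqrt{1 - |\alpha|^2}$. The triangle inequality for the $q$-norm and the definitions of $M$ and $N$ give
\[ N = \|Au\|_q \le |\alpha|\,\|A\hat{r}\|_q + \|Au^\perp\|_q \le |\alpha| M + N\sqrt{1 - |\alpha|^2}, \]
where we use $\|A\hat r\|_q = \|Ar_i\|_q/\|r_i\|_2 \le M$ and $\|Au^\perp\|_q \le N\|u^\perp\|_2$. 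Rearranging gives $N(1-\sqrt{1-|\alpha|^2})\le M|\alpha|$, and the elementary inequality $1 - \sqrt{1-t^2} \ge t^2/2$ for $t \in [0,1]$ yields $|\alpha| \le 2M/N$. This is the matrix analogue of Christ's observation that a near-extremizer cannot be too concentrated in any single ``row direction''.

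With this pointwise bound, I run an interpolation argument analogous to the standard proof of $\|A\|_{2\to q}^q \le \|A\|_{2\to 2}^2 \|A\|_{2\to\infty}^{q-2}$:
\[ N^q = \sum_{i=1}^m |w_i|^{q-2} |w_i|^2 \le \Bigl(\frac{2M}{N}\Bigr)^{q-2} \sum_{i=1}^m \|r_i\|_2^{q-2} |w_i|^2 \le \Bigl(\frac{2M}{N}\Bigr)^{q-2} \|A\|_{2\to\infty}^{q-2} \|w\|_2^2, \]
where the second bound uses $\|r_i\|_2 \le \|A\|_{2\to\infty}$. Since $\|w\|_2 \le \|A\|_{2\to 2} = \|Av\|_2$, this rearranges to
\[ N^{2(q-1)} \le (2M\|A\|_{2\to\infty})^{q-2} \|Av\|_2^2. \]

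Finally, to recover the factor $\|Av\|_q/\|Av\|_2$ appearing in the statement, I note that $\|Av\|_q = \|Av\|_q/\|v\|_2 \le M$ implies $\|Av\|_2 \le M(\|Av\|_2/\|Av\|_q)$, so
\[ N^{2(q-1)} \le M^q (2\|A\|_{2\to\infty})^{q-2} \Bigl(\frac{\|Av\|_2}{\|Av\|_q}\Bigr)^2, \]
and taking the $2(q-1)$-th root and rearranging produces exactly the stated inequality. The coordinate-wise Christ-style bound is the only nontrivial step; everything else is routine interpolation and algebra.
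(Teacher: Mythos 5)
Your proof is correct and rests on exactly the same two ingredients as the paper's: the orthogonal projection of the extremizer $u$ onto a row direction combined with the triangle inequality and $\sqrt{1-t^2}\le 1-t^2/2$, and the $\ell^2$--$\ell^\infty$ interpolation of $\sum_i |w_i|^q$. The only difference is organizational: the paper runs a dichotomy (either every $|\langle w,r_i\rangle|\le\lambda\|A\|_{2\to\infty}$, in which case interpolation applies, or some projection is large, in which case that row is a good test vector) and then optimizes the free threshold $\lambda$ at the end, whereas you prove the uniform pointwise bound $|w_i|\le 2M\|r_i\|_2/N$ unconditionally and substitute it directly into the interpolation sum. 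Your packaging is arguably cleaner — it eliminates the case split and the $\lambda$-optimization, and it makes transparent why the threshold $2M/N$ is the natural one — and it yields the identical final inequality $N^{2(q-1)}\le M^q(2\|A\|_{2\to\infty})^{q-2}(\|Av\|_2/\|Av\|_q)^2$.
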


\begin{proof}
 Let $w \in \mathbb{C}^n$, $\|w \|_2 =1$, satisfy $\|Aw \|_q = \|A\|_{2 \rightarrow q}$, and consider the orthogonal projection of $w$ onto $r_1,...,r_m$. If, for some $\lambda >0$, $|\langle w, r_i \rangle | \le \lambda \|A\|_{2 \rightarrow \infty}$ for all $i = 1,...,m$, then 
$$\|A w\|_q^q = \sum_{i=1}^m |\langle w,r_i \rangle|^q \le \big[\lambda \|A\|_{2 \rightarrow \infty}\big]^{q-2} \sum_{i =1}^m|\langle w,r_i \rangle|^2 \le \big[\lambda \|A\|_{2\rightarrow\infty}\big]^{q-2}\|A\|_{2\rightarrow2}^2,$$
 \begin{align*}
     \|A\|_{2 \rightarrow q} &\le \lambda^{\frac{q-2}{q}} \|A\|_{2 \rightarrow \infty}^{\frac{q-2}{q}}\|A\|_{2 \rightarrow 2}^{\frac{2}{q}}\\
     &= \lambda^{\frac{q-2}{q}} \bigg[ \min_{\substack{x \in S \\ x \ne 0}} \frac{\|A x\|_{q}}{\|x\|_2}  \, \frac{\|x\|_{2} \|A\|_{2\rightarrow \infty}}{\|Ax\|_q}\bigg]^{\frac{q-2}{q}} \bigg[ \frac{\|Av\|_q}{\|v\|_2} \, \frac{\|A v\|_2}{\|Av\|_q} \bigg]^{\frac{2}{q}} \\
     &\le \lambda^{\frac{q-2}{q}} \bigg[\frac{\|Av\|_2}{\|Av\|_q} \bigg]^{\frac{2}{q}} \bigg[\min_{\substack{x \in S \\ x \ne 0}} \frac{\|x\|_2\|A\|_{2\rightarrow \infty}}{\|A x\|_q} \bigg]^{\frac{q-2}{q}}  \max_{\substack{x \in S \\ x \ne 0}} \frac{\|Ax\|_q}{\|x\|_2}. 
 \end{align*}
If there exists some $r_k$ such that $|\langle w,r_k \rangle|> \lambda \|A\|_{2 \rightarrow \infty}$, then we can upper bound $\|A\|_{2 \rightarrow q}$ by considering the orthogonal projection of $w$ onto $r_k$. We have
$$ \bigg\| w - \frac{\langle w,r_k\rangle}{\|r_k\|_2^2} r_k \bigg\|_2^2 = \| w\|_2^2  - \frac{|\langle w,r_k \rangle|^2}{\|r_k\|_2^2} =  1  - \frac{|\langle w,r_k \rangle|^2}{\|r_k\|_2^2}$$
and
\begin{align*}
     \|A w \|_q &\le \bigg\| A \bigg[\frac{\langle w,r_k \rangle}{\|r_k\|_2^2} \, r_k \bigg]\bigg\|_q + \bigg\|A \bigg[w - \frac{\langle w,r_k \rangle}{\|r_k\|_2^2} \, r_k \bigg] \bigg\|_q \\
    &\le  \frac{|\langle w,r_k \rangle|}{\|r_k\|_2^2}\|A r_k \|_q + \|A\|_{2 \rightarrow q} \bigg\| w - \frac{\langle w,r_k \rangle}{\|r_k\|_2^2} r_k \bigg\|_2\\
    &=  \frac{|\langle w,r_k \rangle|}{\|r_k\|_2^2}\|A r_k \|_q + \|A\|_{2 \rightarrow q} \bigg[1  - \frac{|\langle w,r_k \rangle|^2}{\|r_k\|_2^2} \bigg]^{1/2}\\
    &\le \frac{|\langle w,r_k \rangle|}{\|r_k\|_2^2}\|A r_k \|_q + \|A\|_{2 \rightarrow q} \bigg[1  - \frac{|\langle w,r_k \rangle|^2}{2\|r_k\|_2^2} \bigg],
\end{align*}
implying that
$$ \frac{\|Ar_k\|_q}{\|r_k\|_2}  \ge \frac{|\langle w,r_k \rangle|}{2\|r_k\|_2} \|A\|_{2 \rightarrow q} > \frac{\lambda}{2} \, \frac{\|A\|_{2 \rightarrow \infty}}{\|r_k\|_2} \|A\|_{2 \rightarrow q}  \ge \frac{\lambda}{2} \|A\|_{2 \rightarrow q}.$$
Altogether, we have that
$$\max_{\substack{x \in S \\ x \ne 0} } \, \frac{\|A x\|_q}{\|x\|_2} \ge \min \Bigg\{ \frac{\lambda}{2},\frac{1}{\lambda^{\frac{q-2}{q}}} \bigg[\frac{\|Av\|_q}{\|Av\|_2} \bigg]^{\frac{2}{q}} \bigg[\max_{\substack{x \in S \\ x \ne 0}} \frac{\|A x\|_q}{\|x\|_2\|A\|_{2\rightarrow \infty}} \bigg]^{\frac{q-2}{q}}   \Bigg\} \, \|A\|_{2 \rightarrow q}.$$
The quantity $\displaystyle{\min \Bigg\{ \frac{\lambda}{2},\frac{1}{\lambda^{\frac{q-2}{q}}} \bigg[\frac{\|Av\|_q}{\|Av\|_2} \bigg]^{\frac{2}{q}} \bigg[\max_{\substack{x \in S \\ x \ne 0}} \frac{\|A x\|_q}{\|x\|_2\|A\|_{2\rightarrow \infty}} \bigg]^{\frac{q-2}{q}}  \Bigg\}}$ is maximized when $$\lambda = 2^{\frac{q}{2(q-1)}}\left[\frac{\|Av\|_q}{\|Av\|_2} \right]^{\frac{1}{q-1}} \left[\max_{\substack{x \in S \\ x \ne 0}} \frac{\|A x\|_q}{\|x\|_2\|A\|_{2\rightarrow \infty}}  \right]^{\frac{q-2}{2(q-1)}},$$ implying our desired result.
\end{proof}
We note that the bound of Lemma \ref{lm:2toq} has a constant term approaching $\sqrt{2}$ as $q$ tends to infinity. This lemma is clearly not tight for $q = \infty$, giving an estimate of $\sqrt{2}$ instead of $1$. However, this is not an artifact of the bounding procedure, but the technique itself, as, for any fixed $q<\infty$, this factor is unavoidable for matrices whose optimal $\lambda$ is a function of $m$.

Theorem \ref{thm:simplep=2} follows as a corollary to Lemma \ref{lm:2toq}. Although the estimate in Theorem \ref{thm:simplep=2} is simpler, for a particular matrix $A$, the bounds from Lemma \ref{lm:2toq} can be tighter. 
\begin{proof}[Proof of Theorem \ref{thm:simplep=2}] First observe that  
$$\frac{1}{2\|A\|_{2\rightarrow \infty}}\max_{\substack{x \in S \\ x \ne 0}} \frac{\|Ax\|_q}{\|x\|_2} \ge \frac{\|Ar_j\|_q}{2\|Ar_j\|_\infty}.$$
It remains to note that $\frac{\|Ar_j\|_q}{2\|Ar_j\|_\infty}\ge \frac{1}{2}$ and $\frac{\|Av\|_q}{\|Av\|_2}\ge {m^{-\frac{q-2}{2q}}}$, which follows from H\"{o}lder's inequality. 
\end{proof}

\subsection{Examples \label{exs}}
We give examples comparing the bounds from Lemma \ref{lm:2toq} with the interpolation bounds. We show that Lemma \ref{lm:2toq} is essentially the best approximation possible given our inputs. We also demonstrate the dependence of Lemma \ref{lm:2toq} on the choice of witness $v$ of the spectral norm. 
\subsubsection{Tightness of Lemma \ref{lm:2toq} and comparison with interpolation} We begin by recalling the classical interpolation estimates for $\|A\|_{2\to q}$ which are recorded in \cite{steinberg2005computation}. Since $2\le q\le \infty$, the matrix norm $\|A\|_{2\to q}$ is controlled by $\|A\|_{2\to2}$ and $\|A\|_{2,\infty}$ by interpolation. Indeed, if $w\in\C^n$ satisfies $\|A\|_{2\to q}=\|Aw\|_{q}$ and $\|w\|_2=1$, then 
\[  \|Aw\|_{q}^q=\sum_{i=1}^m  |\langle w,r_i\rangle|^q\le \max_{1\le k\le m}|\langle w,r_{k}\rangle|^{q-2}\sum_{i=1}^m  |\langle w,r_i\rangle|^2\le \|A\|_{2\to \infty}^{q-2}\|A\|_{2\to2}^2,\]
where we use the same notation as in Lemma \ref{lm:2toq}. Let $r_j\in\R^n$ be a row of $A$ with maximal $\|r_j\|_2$ and let $v \in\R^n$ satisfy $\|A\|_{2\to 2}=\frac{\|Av\|_2}{\|v\|_2}$. Then the interpolation inequality is 
%\[ \] above inequality implies We also have the lower bounds $\|A\|_{2,\infty}\le \|A\|_{2\to q}$ and, using H\"{o}lder's inequality, 
%\[ \|A\|_{2\to2}^2=\frac{\|Av\|_2^2}{\|v\|_2^2}\le m^{\frac{q-2}%{q}}\frac{\|Av\|_q^{2}}{\|v\|_2^2} \le m^{\frac{q-2}{q}}\|A\|_{2\to q}^2. \]
%All together, this yields the interpolation inequality 
\begin{equation}\label{eqn:int} 
\max\Big(\frac{\|Ar_j\|_q}{\|r_j\|_2},\frac{\|Av\|_q}{\|v\|_2}\Big)\le \|A\|_{2\to q}\le \left[\frac{\|Ar_j\|_\infty}{\|r_j\|_2}\right]^{\frac{q-2}{q}}\left[\frac{\|Av\|_{2}}{\|v\|_2}\right]^{\frac{2}{q}} . \end{equation}
Since it is straightforward to compute $r_j$ and to approximate $v$, \eqref{eqn:int} gives a polynomial time approximation of $\|A\|_{2\to q}$ with a multiplicative error bounded by $\left[\frac{\|Ar_j\|_\infty}{\|Ar_j\|_q}\right]^{\frac{q-2}{q}}\left[\frac{\|Av\|_{2}}{\|Av\|_q}\right]^{\frac{2}{q}}$. %The only calculations involved in the interpolation inequality are
%\[ \|A\|_{2\to2}=\frac{\|Av\|_2}{\|v\|_2}\qquad\text{and}\qquad \|A\|_{2,\infty}=\max_{1\le i\le m}\|r_i\|_2 . \]

The bounds in Lemma \ref{lm:2toq} involve more calculations: we require $\|Av\|_q/\|v\|_2$ and $\|Ar_i\|_q/\|r_i\|_2$ for each $i=1,\ldots,m$. Using Lemma \ref{lm:2toq}, we approximate $\|A\|_{2,q}$ by 
\begin{equation}\label{eqn:lm} \max_{\substack{x \in S \\ x \ne 0} } \, \frac{\|A x\|_q}{\|x\|_2} \le \|A\|_{2 \rightarrow q} \le \left[\max_{\substack{x \in S \\ x \ne 0}} \frac{\|Ax\|_q}{2\|A\|_{2\rightarrow \infty}\|x\|_2 } \right]^{-\frac{q-2}{2(q-1)}} \left[\frac{\|Av\|_q}{\|Av\|_2} \right]^{-\frac{1}{q-1}}\max_{\substack{x \in S \\ x \ne 0} } \, \frac{\|A x\|_q}{\|x\|_2}  . \end{equation}

We will give examples of $m\times m$ matrices $B$ and $D$ which have identical data as inputs for both \eqref{eqn:int} and \eqref{eqn:lm}. For these examples, the range for the $2\to q$ norms given by interpolation is a factor of $m^{\frac{(q-2)^2}{2q^2(q-1)}}$ larger than the range given by Lemma \ref{lm:2toq}. Furthermore, the $2\to q$ norms of $B$ and $D$ take values in the opposite ends of the range given by Lemma \ref{lm:2toq}, which demonstrates that the approximation is the best possible with the given data. In this section, the notation $\sim$ and $\lesssim$ means within a factor of $10$.

Let $n=m+2$.  Let $e_i\in\R^n$ be the standard basis vectors with a $1$ in the $i$th coordinate and $0$ otherwise. Define the rows of the $m\times n$ matrix $B$ by $r_i^B=(1-m^{-\frac{2}{q}})^{\frac{1}{2}} e_i+m^{-\frac{1}{q}} e_n$ for $i=1,\ldots,m$. Thus $B$ has the form
\[ \left[
    \begin{array}{c;{2pt/2pt}c}
    [1-m^{-\frac{2}{q}}]^{1/2}\mbox{\large $I_m$} &  \begin{matrix}
    0&m^{-\frac{1}{q}}\\
    \vdots&\vdots\\
    0&m^{-\frac{1}{q}} \\
\end{matrix} 
    \end{array}
    \right]
  \]
in which $I_m$ is the $m\times m$ identity.

Define the rows of the $m\times n$ matrix $D$ by $r_i^D=m^{-\frac{1}{q}}e_1+m^{-\frac{q-2}{2q(q-1)}}e_2+[1-m^{-\frac{2}{q}}-m^{-\frac{q-2}{q(q-1)}}]^{1/2}e_{i+2}$ if $1\le i\le \lfloor m^{\frac{q-2}{q-1}}\rfloor$ and $r_i^D=m^{-\frac{1}{q}}e_1+[1-m^{-\frac{2}{q}}]^{1/2}e_{i+2}$ if $\lfloor m^{\frac{q-2}{q-1}}\rfloor+1\le i\le m$. This gives the  matrix $D$ the form
\[ \left[
    \begin{array}{c;{2pt/2pt}c;{2pt/2pt}c}
    \begin{matrix}
    m^{-\frac{1}{q}}&m^{-\frac{q-2}{2q(q-1)}}\\
    \vdots&\vdots\\
    m^{-\frac{1}{q}}&m^{-\frac{q-2}{2q(q-1)}} \\
\end{matrix}
 & [1-m^{-\frac{2}{q}}-m^{-\frac{q-2}{q(q-1)}}]^{1/2}\mbox{\large $I_{\lfloor m^{\frac{q-2}{q-1}}\rfloor}$} & {0} \\ \hdashline[2pt/2pt]
    \begin{matrix} 
    m^{-\frac{1}{q}} {\quad \;\;\,}&0 {\quad \; \;} \\
    \vdots {\quad \;\;\,}&\vdots{\quad \; \;} \\
    m^{-\frac{1}{q}} {\quad \;\;\,} &0 {\quad \;\;}\\ \end{matrix} & \large{0} & [1-m^{-\frac{2}{q}}]^{1/2} \mbox{\large $I_{m-\lfloor m^{\frac{q-2}{q-1}}\rfloor}$}
    \end{array}
    \right]. 
  \]
The data that is the same for both $B$ and $D$ is the following:   
\begin{enumerate}
    \item $\|r_i^B\|_2=\|r_i^D\|_2=1$ for all $i=1,\ldots,m$,
    \item $\|B\|_{2\to\infty}=\|D\|_{2\to\infty}=1$ and $\|B\|_{2\to2}\sim\|D\|_{2\to 2}\sim m^{\frac{q-2}{2q}}$,
    \item $\|Br_i^B\|_q\sim\|Br_i^B\|_\infty\sim 1$ and $\|Dr_i^D\|_q\sim\|Dr_i^D\|_\infty\sim 1$ for all $i=1,\ldots,m$. 
\end{enumerate}
Approximate witnesses of the spectral norms of $B$ and $D$ are $v_B=m^{\frac{1}{q}-1}\sum_{j=1}^mr_j$ and $v_D=e_1$, respectively. For these choices of approximate witnesses, write $S_B=\{r_1^B,\ldots,r_m^B,v_B\}$ and $S_D=\{r_1^D,\ldots,r_m^D,v_D\}$. We have
\[ \max_{\substack{x\in S_B\\ x\not=0}}\frac{\|Bx\|_q}{\|x\|_2}\sim \max_{\substack{x\in S_D\\ x\not=0}}\frac{\|Dx\|_q}{\|x\|_2}\sim 1. \]
We also note that $\|Bv_B\|_2\sim\|Dv_D\|_2\sim m^{\frac{1}{2}-\frac{1}{q}}$ and $\|Bv_B\|_q\sim \|Dv_D\|_q\sim 1$. 

Plugging all of this data into \eqref{eqn:int} and \eqref{eqn:lm} leads to the following ranges for the $2\to q$ norms:
\begin{align*}
    \|B\|_{2\to q},\|D\|_{2,q}\in [\frac{1}{10},10m^{\frac{q-2}{q^2}}]&\qquad \text{using interpolation, and } \\
    \|B\|_{2\to q},\|D\|_{2,q}\in [\frac{1}{10},10 m^{\frac{q-2}{2q(q-1)}}]&\qquad \text{using Lemma \ref{lm:2toq}}. 
\end{align*}
Furthermore, $\|B\|_{2\to q}\sim \|Be_1\|_q\sim 1$, demonstrating the sharpness of the lower bound from Lemma \ref{lm:2toq}. For $D$, we have $\|D\|_{2\to q}\sim \|De_2\|_q\sim m^{\frac{q-2}{2q(q-1)}}$, which shows that the upper bound from Lemma \ref{lm:2toq} is sharp. 

Finally, we note the dependence of \eqref{eqn:lm} on the choice of witness for the spectral norm. The vector $w_D=e_2$ also approximately achieves the spectral norm $\|D\|_{2\to 2}\sim m^{\frac{q-2}{2q}}$. Using $w_D$ in place of $v_D$ in \eqref{eqn:lm}, we have
\[  m^{\frac{q-2}{2q(q-1)}}\lesssim \|D\|_{2\to q}\lesssim m^{\frac{(q-2)^2}{4q(q-1)^2}}m^{\frac{q-2}{2q(q-1)}},\]
which is sharp in the lower bound.

\section{Improved Matrix $p \rightarrow q$ Estimation}\label{sec:extend}

\subsection{From $2\rightarrow q$ to $p \rightarrow q$}

Using Lemma \ref{lm:2toq} and interpolation, we obtain improved estimates for the $p \rightarrow q$ norm. We have the following theorem.

\begin{theorem}\label{thm:complicated}
Let $A \in \mathbb{C}^{m \times n}$, $r_1,...,r_m \in \mathbb{C}^n$ denote the conjugate transpose of the $m$ rows of $A$, $r_j$ be the vector with the largest 2-norm, $v \in \mathbb{C}^n$ denote a right singular vector corresponding to the $2$-norm of $A$, $y \in \mathbb{C}^n$ achieve the $1 \rightarrow q$ norm of $A$, and $S = \{v,r_1,...,r_m\}$. Then
$$\max_{\substack{x \in S \cup \{y\} \\ x \ne 0} } \, \frac{\|A x\|_q}{\|x\|_p} \ge  \left[\frac{\|y\|_1}{\|y\|_p} \right]^{\frac{2-p}{p}} \bigg( \left[\frac{\|Ar_j\|_q}{2\|Ar_j\|_\infty} \right]^{\frac{q-2}{2(q-1)}} \left[\frac{\|Av\|_q}{\|Av\|_2} \right]^{\frac{1}{q-1}} \min_{\substack{x \in S  \\ x \ne 0}} \frac{\|x\|_2}{\|x\|_p} \bigg)^{\frac{2(p-1)}{p}}\|A\|_{p \rightarrow q} $$ 
for all $1 \le p \le 2 \le q$, and 
$$\max_{\substack{x \in S  \\ x \ne 0} } \, \frac{\|A x\|_q}{\|x\|_p} \ge n^{\frac{2-p}{2p}} \bigg( \left[\frac{\|Ar_j\|_q}{2\|Ar_j\|_\infty} \right]^{\frac{q-2}{2(q-1)}} \left[\frac{\|Av\|_q}{\|Av\|_2} \right]^{\frac{1}{q-1}} \min_{\substack{x \in S  \\ x \ne 0}} \frac{\|x\|_2}{\|x\|_p} \bigg)\; \|A\|_{p \rightarrow q} $$ 
for all $p,q \ge 2$.
\end{theorem}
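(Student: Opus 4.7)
The plan is to deduce both inequalities from Lemma \ref{lm:2toq} together with the Riesz--Thorin interpolation theorem applied with the second exponent fixed at $q$. Throughout, I will use the weaker form of Lemma \ref{lm:2toq} that replaces $\tfrac{1}{2\|A\|_{2\to\infty}}\max_{x\in S}\|Ax\|_q/\|x\|_2$ by its lower bound $\|Ar_j\|_q/(2\|Ar_j\|_\infty)$; this substitution is justified by the observation made in the proof of Theorem \ref{thm:simplep=2} (using $\|A\|_{2\to\infty}=\|r_j\|_2$ and $\|Ar_j\|_\infty=\|r_j\|_2^2$), and it yields
\begin{equation*}
\|A\|_{2\to q}\le\Big[\tfrac{\|Ar_j\|_q}{2\|Ar_j\|_\infty}\Big]^{-\frac{q-2}{2(q-1)}}\Big[\tfrac{\|Av\|_q}{\|Av\|_2}\Big]^{-\frac{1}{q-1}}\max_{x\in S,\,x\ne 0}\frac{\|Ax\|_q}{\|x\|_2}.
\end{equation*}

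For the case $1\le p\le 2\le q$, I would interpolate between the $1\to q$ and $2\to q$ endpoints: setting $\theta=2(p-1)/p$ so that $1/p=(1-\theta)\cdot 1+\theta\cdot(1/2)$, Riesz--Thorin gives $\|A\|_{p\to q}\le\|A\|_{1\to q}^{(2-p)/p}\,\|A\|_{2\to q}^{2(p-1)/p}$. Writing $\|A\|_{1\to q}=\|Ay\|_q/\|y\|_1=(\|Ay\|_q/\|y\|_p)(\|y\|_p/\|y\|_1)$ bounds the first ratio by $M:=\max_{x\in S\cup\{y\},\,x\ne 0}\|Ax\|_q/\|x\|_p$ and produces the factor $(\|y\|_p/\|y\|_1)^{(2-p)/p}$, which becomes $(\|y\|_1/\|y\|_p)^{-(2-p)/p}$ after rearrangement. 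For the $\|A\|_{2\to q}$ factor, I would apply the displayed bound above and then convert the $\|\cdot\|_2$ denominator to $\|\cdot\|_p$ through $\|Ax\|_q/\|x\|_2=(\|Ax\|_q/\|x\|_p)(\|x\|_p/\|x\|_2)$, obtaining
\begin{equation*}
\max_{x\in S,\,x\ne 0}\frac{\|Ax\|_q}{\|x\|_2}\le\Big(\max_{x\in S,\,x\ne 0}\frac{\|Ax\|_q}{\|x\|_p}\Big)\Big(\max_{x\in S,\,x\ne 0}\frac{\|x\|_p}{\|x\|_2}\Big)\le\frac{M}{\min_{x\in S,\,x\ne 0}\|x\|_2/\|x\|_p}.
\end{equation*}
Since $(2-p)/p+2(p-1)/p=1$, the combined power of $M$ collapses to $M^1$, and isolating $M$ yields exactly the claimed inequality.

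For the case $p,q\ge 2$, I would instead interpolate between $2\to q$ and $\infty\to q$ with $\theta=2/p$, which gives $\|A\|_{p\to q}\le\|A\|_{\infty\to q}^{(p-2)/p}\|A\|_{2\to q}^{2/p}$. The elementary inequality $\|x\|_2\le n^{1/2}\|x\|_\infty$ yields $\|A\|_{\infty\to q}\le n^{1/2}\|A\|_{2\to q}$, so the interpolation collapses to $\|A\|_{p\to q}\le n^{(p-2)/(2p)}\|A\|_{2\to q}$. Applying the displayed Lemma \ref{lm:2toq} bound and performing the same $\|\cdot\|_2$-to-$\|\cdot\|_p$ conversion as above, then rearranging, completes this case; no auxiliary $y$ is required because the endpoint at $\infty$ is absorbed entirely into the dimension factor $n^{(p-2)/(2p)}$.

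The proof is essentially bookkeeping; the main obstacle will be tracking the exponents carefully so that the factor $\min_{x\in S}\|x\|_2/\|x\|_p$ appears with the correct power and inside the correct bracket, and similarly for the exponent $(2-p)/p$ on $\|y\|_1/\|y\|_p$. For orientation: this minimum is $\le 1$ when $p\le 2$ and $\ge 1$ when $p\ge 2$, so in the first case the factor itself is the primary source of loss in the approximation, whereas in the second case the prefactor $n^{(2-p)/(2p)}\le 1$ absorbs the loss.
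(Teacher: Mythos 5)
Your proposal is correct and follows essentially the same route as the paper: Riesz--Thorin interpolation between the $1\to q$ and $2\to q$ endpoints (resp.\ the reduction $\|A\|_{p\to q}\le n^{\frac{p-2}{2p}}\|A\|_{2\to q}$ for $p,q\ge 2$), the weakened form of Lemma \ref{lm:2toq} with $\|Ar_j\|_q/(2\|Ar_j\|_\infty)$, and the conversion of $2$-norms to $p$-norms via the factor $\min_{x\in S}\|x\|_2/\|x\|_p$, with the exponents $(2-p)/p$ and $2(p-1)/p$ summing to $1$ so the maximum can be isolated. The only cosmetic difference is that you derive the $p,q\ge 2$ reduction by interpolating against the $\infty\to q$ endpoint rather than invoking $\|x\|_2\le n^{1/2-1/p}\|x\|_p$ directly.
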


\begin{proof}
Using the bound $\|A\|_{p \rightarrow q} \le \|A\|_{1 \rightarrow q}^{\frac{2-p}{p}} \|A \|_{2 \rightarrow q}^{\frac{2(p-1)}{p}}$ and the results of Lemma \ref{lm:2toq}, our estimates for $\|A\|_{p \rightarrow q}$, $1 \le p \le 2 \le q$, follow relatively quickly:
\begin{align*}
\|A\|_{p \rightarrow q} &\le \|A\|_{1 \rightarrow q}^{\frac{2-p}{p}} \|A \|_{2 \rightarrow q}^{\frac{2(p-1)}{p}} \\
&\le \bigg[ \frac{\|A y\|_q}{\|y\|_1} \bigg]^{\frac{2-p}{p}} \bigg( \left[\frac{2\|Ar_j\|_\infty}{\|Ar_j\|_q} \right]^{\frac{q-2}{2(q-1)}} \left[\frac{\|Av\|_2}{\|Av\|_q} \right]^{\frac{1}{q-1}}\max_{\substack{x \in S \\ x \ne 0} } \, \frac{\|A x\|_q}{\|x\|_2}\bigg)^{\frac{2(p-1)}{p}} \\
&=\bigg[ \frac{\|y\|_p}{\|y\|_1}\, \frac{\|A y\|_q}{\|y\|_p} \bigg]^{\frac{2-p}{p}} \bigg( \left[\frac{2\|Ar_j\|_\infty}{\|Ar_j\|_q} \right]^{\frac{q-2}{2(q-1)}} \left[\frac{\|Av\|_2}{\|Av\|_q} \right]^{\frac{1}{q-1}}\max_{\substack{x \in S \\ x \ne 0} } \, \frac{\|x\|_p}{\|x\|_2} \, \frac{\|A x\|_q}{\|x\|_p}\bigg)^{\frac{2(p-1)}{p}} \\
&\le \bigg[ \frac{\|y\|_p}{\|y\|_1}\bigg]^{\frac{2-p}{p}} \bigg( \left[\frac{2\|Ar_j\|_\infty}{\|Ar_j\|_q} \right]^{\frac{q-2}{2(q-1)}} \left[\frac{\|Av\|_2}{\|Av\|_q} \right]^{\frac{1}{q-1}}\max_{\substack{x \in S \\ x \ne 0} } \, \frac{\|x\|_p}{\|x\|_2} \bigg)^{\frac{2(p-1)}{p}} \max_{\substack{x \in S \cup \{y\} \\ x \ne 0} } \,  \frac{\|A x\|_q}{\|x\|_p}.
\end{align*}
The regime of $p,q \ge 2$ is also straightforward, and follows from the bound $\|A\|_{p \rightarrow q} \le n^{1/2 - 1/p} \|A\|_{2 \rightarrow q}$:
\begin{align*}
\|A\|_{p \rightarrow q} &\le n^{\frac{p-2}{2p}} \|A\|_{2 \rightarrow q} \\ &\le n^{\frac{p-2}{2p}}\left[\frac{2\|Ar_j\|_\infty}{\|Ar_j\|_q} \right]^{\frac{q-2}{2(q-1)}} \left[\frac{\|Av\|_2}{\|Av\|_q} \right]^{\frac{1}{q-1}} \;\max_{\substack{x \in S \\ x \ne 0} } \, \frac{\|A x\|_q}{\|x\|_2} \\ &\le n^{\frac{p-2}{2p}} \left[\frac{2\|Ar_j\|_\infty}{\|Ar_j\|_q} \right]^{\frac{q-2}{2(q-1)}} \left[\frac{\|Av\|_2}{\|Av\|_q} \right]^{\frac{1}{q-1}}\max_{\substack{x \in S \\ x \ne 0} } \, \frac{\|x\|_p}{\|x\|_2} \; \max_{\substack{x \in S  \\ x \ne 0} } \,  \frac{\|A x\|_q}{\|x\|_p}.
\end{align*}
\end{proof}

Theorem \ref{thm:simple} follows immediately from the above theorem and the basic p-norm inequality $\|x\|_q \le \| x \|_p \le n^{1/p - 1/q} \|x\|_{q}$ for all $x \in \mathbb{C}^n$ and $p \le q$.

\begin{table}[t]
\centering
\begin{tabular}{|l || r | r | r | r| r| r | r| r|}
\hline  & & & \multicolumn{3}{c|}{Adj. Mat. $A$} & \multicolumn{3}{c|}{Norm. Adj. $\mathcal{A}$} \\
 graph & $n$ & $\text{deg}_{avg}$  & $\alpha_{apx}$ & $\alpha_{imp}$ & $M_{imp}$   & $\alpha_{apx}$ & $\alpha_{imp}$ & $M_{imp}$ \\
\hline \hline
144 & 144,649 & 14.86 & 2.11 & $0.5\%^\dagger$& $2.8\%$  & 3.30 & $22.1\%$& $0.0\%$\\ \hline
598a & 110,971 & 13.37 & 2.60 & $12.3\%$& $0.0\%$ & 3.20& $21.0\%$& $0.0\%$ \\ \hline
auto & 448,695 & 14.77 & 2.23 & $5.2\%$& $0.0\%$ & 3.40 & $25.1\%$& $5.1\%$ \\ \hline
ca-AstroPh & 18,772 & 21.10 & 1.91 & $0.0\%$& $0.0\%$ & 1.84 & $0.0\%$& $0.0\%$ \\ \hline
ca-HepPh & 12,008 & 19.74 & 1.43 & $0.0\%$& $0.0\%$ & 1.92 & $0.0\%$& $0.0\%$ \\ \hline
coPapersCiteseer & 434,102 & 73.88 & 1.33 & $13.1\%$& $0.0\%$ & 2.94 & $17.6\%$& $0.0\%$ \\ \hline
coPapersDBLP & 540,486 & 56.41  & 1.88 & $9.9\%$& $0.0\%$ & 3.51 & $24.5\%$& $0.0\%$ \\ \hline
data & 2,851 & 10.59 & 2.10 & $3.8\%$& $2.9\%$ & 2.33 & $8.9\%$& $3.2\%$ \\ \hline
fe\_rotor & 99,617 & 13.30 & 1.65 & $0.0\%$& $0.0\%$  & 3.13 & $21.0\%$& $2.2\%$\\ \hline
fe\_tooth & 78,136 & 11.58 & 2.58 & $11.6\%$& $0.0\%$ & 3.10 & $19.7\%$& $0.0\%$ \\ \hline
m14b & 214,765 & 15.64 & 2.43 & $9.4\%$& $0.0\%$ & 3.26 & $21.7\%$& $0.0\%$  \\ \hline
mycielskian16 & 49,151 & 679.18 & 2.38 & $11.3\%$& $6.8\%$ & 2.47 & $10.4\%$& $0.6\%$ \\ \hline
wing\_nodal & 10,937 & 13.80 & 2.42 & $9.1\%$& $0.0\%$ & 2.65 & $14.1\%$& $2.4\%$ \\ \hline

\end{tabular}
\vspace{4 mm}
\caption{We consider the matrix $2\to4$ norm for a variety of graphs taken from application. This table contains experiments for every undirected graph in the SuiteSparse Matrix Collection \cite{davis2011university} with between $1,000$ and $1,000,000$ vertices and average degree $\text{deg}_{avg}$ at least $10$. For either the adjacency matrix $A$ or the normalized adjacency matrix $\mathcal{A}$, we define a number of parameters. The quantity $\alpha_{apx}$ is the approximation produced by Lemma $\ref{lm:2toq}$, e.g., Lemma \ref{lm:2toq} produces a vector $x$ and a number $\alpha_{apx}$ such that $\|A x \|_4 / \|x \|_2 \ge \alpha_{apx} \|A\|_{2\to 4}$. The quantity $\alpha_{imp}$ is the percent decrease in the approximation between Lemma \ref{lm:2toq} and standard interpolation bounds applied to the vectors $v$ and $r$ achieving the $2\rightarrow 2$ and $2 \rightarrow \infty$ norms, respectively (e.g., the quantity $[\|Ar\|_\infty \|Av \|_2 / (\|Ar\|_4 \|Av\|_4)]^{1/2}$. The quantity $M_{imp}$ is the percent increase in largest computed $\|A x\|_4 / \|x\|_2$ between Lemma \ref{lm:2toq} and the test vectors $v$ and $r$ produced by interpolation. The additional computation of $\|A x \|_4 / \|x\|_2$ for the rows of $A$ led to an improvement in the best-known vector approximation over $30\%$ of the time. However, even when the computation did not produce an improved vector approximation, it provided an improved guarantee (e.g., approximation ratio) of the quality of the existing vector approximation over $80\%$ of the time.}
\label{tab:experiments}
\end{table}

\subsection{From Theory to Practice} To this point, our analysis has consisted entirely of the worst-case theoretical behavior. Theorem \ref{thm:complicated} gives a theoretical improvement over previous techniques. Here, we focus on how the theoretical results of this work influence the practical implementation of algorithms in practice. In terms of complexity, computing the largest singular value up to a relative accuracy of order $\epsilon$ with high probability requires at most $O(\log (n/\epsilon) /\epsilon^{1/2})$ matrix vector products using Krylov subspace methods (see \cite{musco2015randomized, urschel2021uniform,meyer2023unreasonable} and references therein for details). The ratio of $\|A r_j\|_q/\|r_j\|_p$ for all $j = 1,...,m$ can be computed in matrix multiplication time, and for a matrix $A \in \mathbb{C}^{m \times n}$ with at most $k$ non-zeros per row, in $O(k^2 n)$ time, making our results particularly attractive for sparse matrices. In addition, we note that the vector with largest $p \rightarrow q$ norm produced by Theorem \ref{thm:complicated} is a strong initial guess for the matrix $p \rightarrow q$ norm, but would certainly benefit from the application of a modified power method in the spirit of Boyd and Higham \cite{boyd1974power,higham1992estimating}, or a general non-linear solver. 

Here we consider the practical effectiveness of Lemma \ref{lm:2toq} for estimating the $2 \rightarrow 4$ norm of the adjacency matrix of sparse graphs. This is a natural choice, given the connection between the $2 \to 4$ norm and small set expansion in graph theory \cite{barak2012hypercontractivity}. Let $G = (V,E)$, $n = |V|$, be an undirected graph. We denote by $\text{deg}(i)$ the degree of vertex $i$, i.e., the number of edges containing vertex $i$. The adjacency matrix $A \in \mathbb{R}^{n \times n}$ of a graph $G$ has entries $A(i,j) = 1$ if there is an edge between vertex $i$ and vertex $j$, and $A(i,j) = 0$ otherwise. The normalized adjacency matrix $\mathcal{A}$ has entries $\mathcal{A}(i,j) = 1/[\text{deg}(i) \text{deg}(j)]^{1/2}$ if there is an edge between vertex $i$ and vertex $j$, and $\mathcal{A}(i,j) = 0$ otherwise. A graph is said to be d-regular if $\text{deg}(i) = d$ for all vertices $i$. In this setting, we note that $\mathcal{A} = A/d$. We compare the performance of our results (Lemma \ref{lm:2toq}) to interpolation in two different settings.

First, we consider $d$-regular graphs, a setting for which Lemma \ref{lm:2toq} will clearly outperform interpolation techniques. Because each row contains exactly $d$ entries equal to one and all other entries zero, the row norm contains no information save for the degree $d$. In addition, by the Perron-Frobenius Theorem, the top singular value is always $d$ and the constant vector is always a corresponding right singular vector. Therefore, interpolation techniques cannot reliably discern any structure from a d-regular graph, save for the fact that it is d-regular. In contrast, using the rows as test vectors gives further information.
 
Next, we consider large real-world graphs with non-uniform degree structure, taken from the SuiteSparse matrix collection (formerly referred to as the University of Florida Sparse Matrix Collection) \cite{davis2011university}. This collection contains graphs from a variety of applications, including structural finite element meshes, social networks, collaboration networks, road networks, and many other real-world examples. We consider every undirected graph from this collection with between $1,000$ and $1,000,000$ vertices and average degree at least $10$. We analyze both the adjacency and normalized adjacency matrices of these graphs, and report the performance of Lemma \ref{lm:2toq}. In a number of cases, a row without maximum 2-norm provides the best approximation. More generally, experimentally we note that even if the testing of the rows of a matrix do not improve the estimate of $\|A\|_{2 \to 4}$ itself, it often still improves the approximation guarantee -- an equally important parameter that verifies the quality of the estimate. We provide the details of these experiments in Table \ref{tab:experiments}.

{ 
	\bibliographystyle{plain}
	\bibliography{main} }

\end{document}